\newenvironment{proof}{{\indent  \indent \it Proof:}}{\hfill $\blacksquare$}
\begin{document}
\title{Intelligent Surface Empowered Sensing and Communication: A Novel Mutual Assistance Design}
		
\author{
	Kaitao Meng, Qingqing Wu, Wen Chen, Enrico Paolini, and Elisabetta Matricardi
	\thanks{K. Meng is with the State Key Laboratory of Internet of Things for Smart City, University of Macau, Macau, 999078, China (email: kaitaomeng@um.edu.mo). Q. Wu and W. Chen are with the Department of Electronic Engineering, Shanghai Jiao Tong University, 200240, China (emails: \{qingqingwu, wenchen\}@sjtu.edu.cn). E. Paolini and E. Matricardi are with the Wireless Communications Laboratory, CNIT, DEI, University of Bologna, 40136 Bologna, Italy (emails: \{e.paolini,  elisabett.matricard3\}@unibo.it)}  
}
\maketitle

\begin{abstract}
Integrated sensing and communication (ISAC) is a promising paradigm to provide both sensing and communication (S\&C) services in vehicular networks. However, the power of echo signals reflected from vehicles may be too weak to be used for future precise positioning, due to the practically small radar cross section of vehicles with random reflection/scattering coefficient. To tackle this issue, we propose a novel mutual assistance scheme for intelligent surface-mounted vehicles, where S\&C are innovatively designed to assist each other for achieving an efficient win-win integration, i.e., sensing-assisted phase shift design and communication-assisted high-precision sensing. Specifically, we first derive closed-form expressions of the expected echo power and achievable rate under uncertain angle information. Then, the communication rate is maximized while satisfying sensing requirements, which is proved to be a monotonic optimization problem on time allocation. Furthermore, we unveil the feasible condition of the problem and propose a polyblock-based optimal algorithm. Simulation results validate that the performance trade-off bound of S\&C is significantly enlarged by the novel design exploiting mutual assistance in intelligent surface-aided vehicular networks.
\end{abstract}   

\vspace{-1mm}
\begin{IEEEkeywords}
	\vspace{-0.5mm}
	Mutual assistance, intelligent surface, integrated sensing and communication, vehicular communication.
\end{IEEEkeywords}
\newtheorem{thm}{\bf Lemma}
\newtheorem{remark}{\bf Remark}
\newtheorem{Pro}{\bf Proposition}
\newtheorem{theorem}{\bf Theorem}
\newtheorem{Assum}{\bf Assumption}
\newtheorem{Cor}{\bf Corollary}

\vspace{-2.3mm}
\section{Introduction}
\vspace{-1mm}
Vehicle-to-everything (V2X) communications have attracted considerable attention due to the growing popularity of vehicular applications, such as autonomous driving and intelligent transportation systems \cite{Challenges2021Gyawali}. Driven by these applications, besides high-quality communication, the requirements of vehicle localization accuracy and latency have become more stringent. By far, sensing and communication (S\&C) functionalities in the same frequency are usually designed separately, which inevitably leads to severe mutual interference between S\&C \cite{Zheng2019RadarCommunication}. To fully exploit the potential of the limited wireless resources, integrated sensing and communication (ISAC) techniques have been proposed to simultaneously convey communication information to receivers and extract target information from scattered echoes, thereby providing higher spectral efficiency with lower hardware cost \cite{Liu2022Survey}. 

Most existing works regarding ISAC focus on the investigation of trade-offs between S\&C, such as resource allocation and beampattern design \cite{meng2022throughput}. In addition to exploring the integration gain of S\&C, the interplay between S\&C offers the potential to achieve coordination gain in vehicular networks. Some useful sensing-assisted communication schemes are proposed to reduce the overhead of beam training \cite{liu2020radar}. For instance, \cite{liu2020radar} proposed a  predictive beamforming scheme to improve the sensing accuracy while guaranteeing communication performance by tracking the vehicle based on echo signals. However, the radar cross section of vehicles is generally small and the corresponding reflection/scattering coefficient is random. As a result, the power of echo signals reflected from vehicles may be too weak and fluctuating to be utilized for high-precision and high-reliability positioning, especially for the limited transmit power of road site units (RSUs). Moreover, dedicated information signals generally lead to interference to other unintended users in traditional wireless communications. Nevertheless, from a sensing perspective, the echo signal generated by this interference provides new opportunities to improve localization performance. This thus motivates us to design communication-assisted sensing schemes with the exploitation of interference.
 
 
Intelligent reflecting/refracting surfaces (IRSs) \cite{Simultaneously2022Mu} have been proposed to reconfigure wireless channels, and thus enhance the S\&C performance. For example, in \cite{singh2022visible, Gu2022Intelligent}, IRSs were deployed on the buildings/roadside to enhance the communication performance and service coverage in vehicular networks. Different from traditional schemes deploying IRSs at fixed locations, we propose a novel design to overcome the localization challenges caused by the weak/fluctuating power of echoes and realize mutual assistance between S\&C, i.e., deploying IRSs on the surface of vehicles. Specifically, the signals refracted through the IRSs are focused on the users inside vehicles to improve communication performance while the echo signals reflected by the IRS are boosted at the RSU receivers for enhancing sensing performance and reducing potential interference to other wireless systems. In this case, the sensing results can be further utilized to design IRS phase shifts for communication; while the interference caused by the information signal to unintended users can also be exploited for vehicle detection and tracking. Therefore, S\&C are able to assist each other for achieving an efficient win-win integration, i.e., sensing-assisted phase shift design and communication-assisted high-precision sensing. The main contributions of this paper are summarized as follows: 
\begin{itemize}[leftmargin=*]
	\item We propose a novel design exploiting mutual assistance between S\&C. The achievable rate is improved by utilizing sensing results while sensing performance is enhanced by exploiting IRS's reflected information signals.
	\item We derive closed-form expressions of the echo power and achievable rate under uncertain angle information, based on which, the considered problem is transformed and optimally solved through the proposed polyblock-based algorithm.
	\item Simulation results validate that the S\&C performance bound is greatly enlarged by the novel design exploiting mutual assistance in intelligent surface-aided vehicular networks.
\end{itemize}

\begin{figure}[t]
	\centering
	\includegraphics[width=7.8cm]{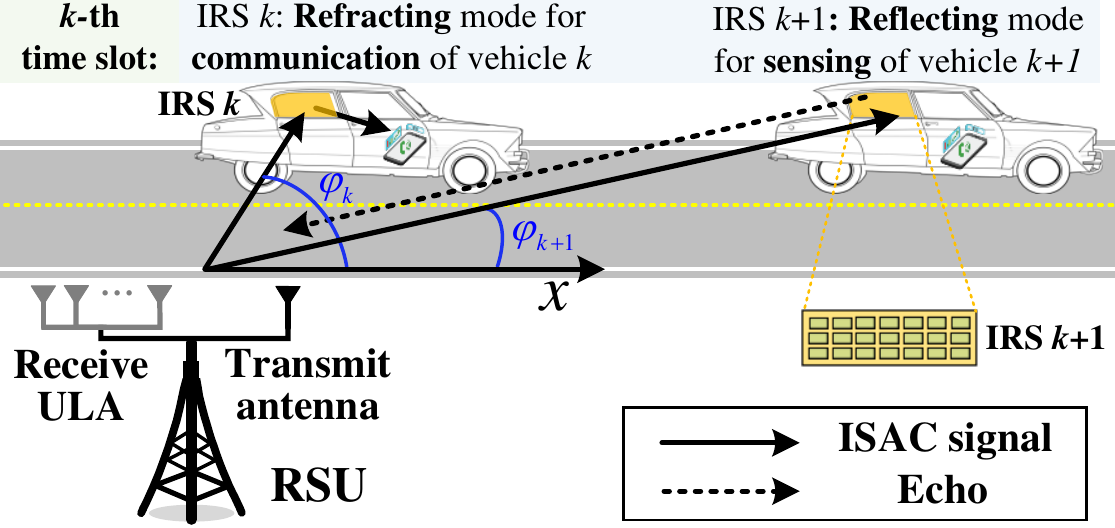}
	\vspace{-3.5mm}
	\caption{Mutually assisted S\&C in vehicular networks with IRSs.}
	\label{figure1a}
	\vspace{-4.5mm}
\end{figure}

\vspace{-4mm}
\section{System Model and Problem Formulation}
\vspace{-0.5mm}
\label{AOAEstimationError}
As shown in Fig.~\ref{figure1a}, we consider an IRS-aided ISAC system to achieve mutual assistance between S\&C, where one RSU sends independent signals to $K$ IRS-mounted mobile vehicles (indexed by $k \in {\cal{K}} = \{1,\cdots,K\}$) in a time-division manner and analyzes the echo signals reflected from IRSs simultaneously. The RSU employs one transmit antenna and one general uniform linear array (ULA) with $M_r$ receive antennas along the $x$-axis. Vehicles are assumed to drive along a straight road parallel to the $x$-axis.\footnote{It can be extended to roads with arbitrary geometry by establishing the relationship between the vehicle state estimation and the road geometry.} An IRS with reflecting and refracting modes is deployed on the surface of each vehicle to replace part of its glass windows or metal panels. The ULA with half-wavelength antenna spacing is adopted at the IRS,\footnote{A uniform planar array (UPA) can be readily extended by adding another receive ULA at the RSU to estimate the elevation angle to the IRSs.} and the number of IRS elements equipped on each vehicle is $L$, where the element is indexed by $l \in {\cal{L}} = \{ 1,\cdots,L\}$. 

\vspace{-2.5mm}
\subsection{Proposed Framework for Mutually Assisted S\&C}
\label{FrameworkMutualBenefit}
\vspace{-1mm}
As shown in Fig.~\ref{figure1b}, the whole service period $T$ is divided into multiple frames with equal length $\Delta T$, indexed by $n \in \{1,\cdots, N \}$, where $N = {{T}/{\Delta T}}$. As done in \cite{liu2020radar}, the motion parameters are assumed to keep constant within each frame. Each frame is further divided into $K+1$ time slots, indexed by $i \in \{0\} \cup {\cal{K}}$, and the ratio of the $i$th time slot to the current frame is denoted by $\eta_i$, with $\sum\nolimits_{i = 0}^K \eta_i = 1$.\footnote{Idle intervals can be added between adjacent time slots to improve the robustness of time synchronization between vehicles.} An example of the synergy between S\&C is provided as follows:
\begin{itemize}[leftmargin=*]
	\item {\textit{Sensing-assisted communication}}: At the $k$th time slot, the RSU sends signal $s_k(t)$ to vehicle $k$. IRS $k$ is in the \textit{refracting} mode and its phase shifts are designed for communication based on the sensing results at time slot $k-1$.
	\item {\textit{Communication-assisted sensing}}: Signal $s_k(t)$ incident on IRS $k+1$ at the $k$th time slot is generally considered useless, but can be utilized to enhance echo signals and help localize vehicle $k+1$ by setting IRS $k+1$ into the \textit{reflecting} mode.
\end{itemize} 
As shown in Fig.~\ref{figure1b}, the first and last time slots are respectively designed for dedicated sensing of vehicle 1 and communication of vehicle $K$. Overall, the S\&C tasks of vehicle $k$ are executed sequentially at the $(k \!- \! 1)$th and $k$th time slots, while the communication task of vehicle $k$ and the sensing task of vehicle $k \!+ \! 1$ are executed simultaneously at the $k$th time slot. IRS $k$ is set to be at the refracting/reflecting mode for communication/sensing at the $k$th/$(k-1)$th time slot, and be switched off at other time slots. Then, in the following frames, the IRSs are set to reflecting or refracting mode in the same scheduling order. Note that the refracting and reflecting coefficient matrices of IRS $k$ at the $k$th and $(k-1)$th time slots are respectively given by ${\bm{\Theta}}^{\mathrm{T}}_{k,n} = {\rm{diag}}(e^{j \theta^{\mathrm{T}}_{k,1,n}}, ... , e^{j \theta^{\mathrm{T}}_{k,L,n}})$ and ${\bm{\Theta}}^{\mathrm{R}}_{k,n} = {\rm{diag}}(e^{j \theta^{\mathrm{R}}_{k,1,n}}, ... , e^{j \theta^{\mathrm{R}}_{k,L,n}})$, where $\theta^{\mathrm{T}}_{k,l,n}$ and $\theta^{\mathrm{R}}_{k,l,n} \in [0, 2 \pi)$, respectively represent the refraction- and reflection-phase shifts of the $l$th element of IRS $k$. Based on the above design, S\&C interference can be effectively avoided even if the distance between vehicles is small.

As shown in Fig.~\ref{figure1a}, during the $n$th frame, the bearing angle of vehicle $k$ relative to the receive ULA is estimated by super-resolution algorithms like multiple signal classification (MUSIC) \cite{schmidt1986multiple}, denoted by $\varphi_{k,n}$, and these instantaneous channel information is further transmitted to vehicle $k$ for phase shift design.During the sensing/communication time of each vehicle, the IRS phase shift matrices ${\bm{\Theta}}^{\mathrm{R}}_{k,n}$/${\bm{\Theta}}^{\mathrm{T}}_{k,n}$ are designed based on the instantaneous state estimation/state tracking angles, denoted by $\varphi_{k,n|n-1}$/$\tilde \varphi_{k,n}$. Specifically, the state estimation model of vehicle $k$ can be given by \cite{liu2020radar}
\vspace{-1.5mm}
\begin{equation}\label{StateEsitmation}
	\left\{ \!\!\begin{array}{l}
		\varphi_{k,n|n-1}=\varphi_{k,n-1}+d_{k,n-1}^{-1} v_{k,n-1} \Delta T \sin \left(\varphi_{k,n-1}\right)+\omega_{\varphi} \\
		d_{k,n|n-1}=d_{k,n-1}-v_{k,n-1} \Delta T \cos \left(\varphi_{k,n-1}\right)+\omega_{d} \\
		v_{k,n|n-1}=v_{k,n-1}+\omega_{v}
		\vspace{-2.5mm}
	\end{array}\right. \!\!\!\!\!,
\end{equation}
where $d_{k,n}$ and $v_{k,n}$ respectively represent the distance from the RSU to IRS $k$ and the velocity of vehicle $k$ during the $n$th frame. In (\ref{StateEsitmation}), $\omega_{\varphi}$, $\omega_{d}$, and $\omega_{v}$ denote the estimation noise, and $\sigma^2_{\omega_{\varphi}}$, $\sigma^2_{\omega_{d}}$, and $\sigma^2_{\omega_{v}}$ are the corresponding variances. During the $n$th frame, the measured angle of vehicle $k$ can be modeled as $\hat \varphi_{k,n}=\varphi_{k,n}+{z}_{\varphi_{k,n}}$, where ${z}_{\varphi_{k,n}} \in \mathcal{C} \mathcal{N}(0, {\sigma}^2_{z_{k,n}} )$ is the measurement error. Then, the Kalman filtering is adopted for vehicle tracking. Accordingly, the tracking angle can be expressed as $\tilde \varphi_{k,n} = \varphi_{k,n|n-1} + \frac{\sigma _{{\omega_{\varphi}}} ^2}{\sigma _{{\omega_{\varphi}}} ^2 + \sigma^2_{z_{\varphi_{k,n}}}} (\hat \varphi_{k,n} - \varphi_{k,n|n-1})$, and the variance of the tracking angle can be expressed as
\vspace{-2mm}
\begin{equation}\label{StateTrackingError}
	\sigma _{\tilde \varphi_{k,n}} ^2 = {\sigma _{{\omega_{\varphi}}} ^2} - {\sigma _{{\omega_{\varphi}}} ^4}({\sigma _{{\omega_{\varphi}}} ^2 + \sigma^2_{z_{\varphi_{k,n}}}})^{-1} .
\end{equation}

\begin{figure}[t]
	\centering
	\includegraphics[width=8.7cm]{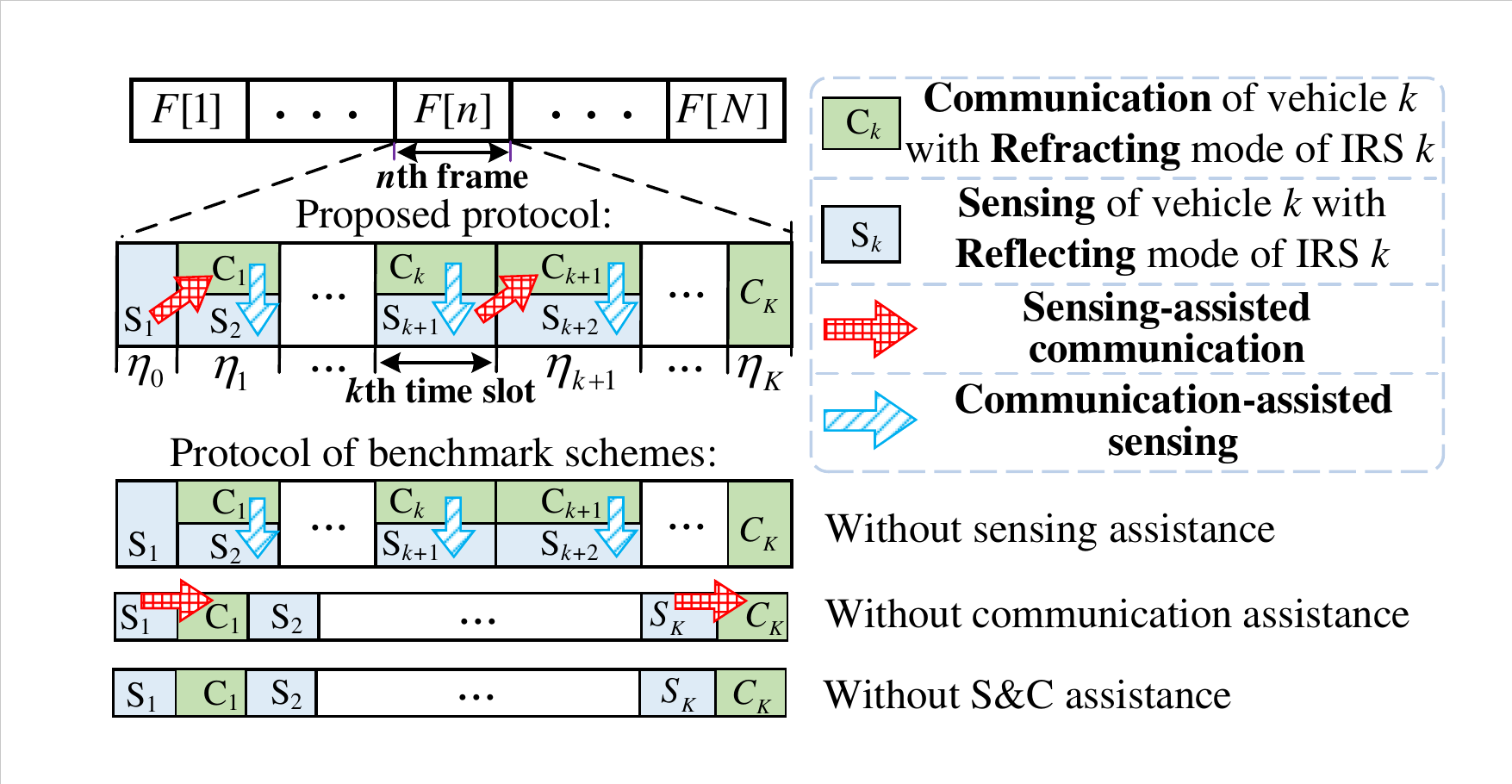}
	\vspace{-3mm}
	\caption{Proposed mutually assisted protocol and benchmark scheme protocols.}
	\label{figure1b}
	\vspace{-5mm}
\end{figure}

\vspace{-4mm}
\subsection{Radar Measurement and Communication Model}
\label{RadarMeasurement}
The channel power gain between the RSU and IRS $k$ can be given by $\beta_{G,k,n} = \beta_0 d_{k,n}^{-2}$, where $\beta_0$ is the channel power at the reference distance 1 meter (m). $\bm{h}^{\mathrm{DL}}_{k,n}  \in \mathbb{C}^{L \times 1}$ and $\bm{H}^{\mathrm{UL}}_{k,n}$ $\in \mathbb{C}^{M_r \times L}$ are respectively the downlink and uplink channel matrices between the RSU and IRS $k$, given by
\vspace{-1.5mm}
\begin{equation}\label{SteeringVector}
	\bm{h}^{\mathrm{DL}}_{k,n}=\sqrt{\beta_{G,k,n}} \bm{a}_{\mathrm{IRS}}\left(-\varphi_{k,n}\right) ,
\end{equation}
\begin{equation}
	\bm{H}^{\mathrm{UL}}_{k,n}=\sqrt{\beta_{G,k,n}}\bm{b}_{\mathrm{RSU}}\left(\varphi_{k,n}\right) \bm{a}_{\mathrm{IRS}}^{T}\left( -\varphi_{k,n}\right), 
	\vspace{-1.5mm}
\end{equation}
where ${\bm{a}}_{\mathrm{IRS}}(-\varphi_{k,n}) = [1, \cdots, e^{ {j \pi(L-1)  \cos(\varphi_{k,n}) }}]^T$ and $ {\bm{b}}_{\mathrm{RSU}}(\varphi_{k,n}) = \left[1, \cdots, e^{ {-j \pi(M_r-1)  \cos(\varphi_{k,n}) }}\right]^T$. The IRS-device link is assumed to be quasi-static during each frame, expressed as ${\bm{h}}_k	= \sqrt{\beta_{h,k}} [1, \cdots, e^{ {-j \pi(L-1) \cos({\varphi}^u_{k,n} )}}]^T$, since IRS $k$ and the communication device inside vehicle $k$ (namely communication device $k$) keep relatively stationary. Here, $\beta_{h,k}$ represents the channel power gain from IRS $k$ to communication device $k$, and ${\varphi}^u_{k,n}$ is the bearing angle of communication device $k$ relative to IRS $k$, which can be obtained based on the device location or existing channel estimation methods \cite{Wu2022IntelligentReflecting}.

At the $(k-1)$th time slot, the signal $s_{k-1}(t)$ incident on IRS $k$ is reflected towards the RSU for positioning vehicle $k$. The echo signals reflected from IRS $k$ can be expressed as $
{\bm{r}}_{k}(t) \!=\!  \sqrt{P_{\mathrm{A}}}e^{j2\pi \mu_{k,n} t}\! {\bm{H}}^{\mathrm{UL}}_{k,n} {\bm{\Theta}}_{k,n}^{\mathrm{R}} {\bm{h}}^{\mathrm{DL}}_{k,n}  s_{k-1,n}(t - \tau_{k,n}) + {\bm{z}}_r(t)$, where $\mu_{k,n}$ and $\tau_{k,n}$ denote Doppler frequency and the round-trip delay of echo signals, $P_{\mathrm{A}}$ denotes the constant transmit power, and ${\bm{z}}_r(t)$ represents the noise at the receive ULAs. Generally, the error of estimated angles is inversely proportional to the echo power at the RSU \cite{Liu2022Survey}, i.e.,
\vspace{-1.5mm}
\begin{equation}\label{ReceivedPowerEquation}
	\sigma_{z_{\varphi_{k,n}}}^{2} \propto ({{\gamma^{\mathrm{S}}_{k,n}{{\sin }^2}(\varphi_{k,n}) }})^{-1}, 
	\vspace{-1.5mm}
\end{equation}
where ${\gamma^{\mathrm{S}}_{k,n}}$ is the signal-to-noise ratio (SNR) at the receive ULA after match-filtering. Due to the impact of the uncertain angle on phase shift design, at the $(k-1)$th time slot, the SNR of the received echos is expressed in expectation form, i.e.,
\vspace{-1.5mm}
\begin{equation}\label{SensingReceivedPower}
	\gamma^{\mathrm{S}}_{k,n} \!=\! \frac{{\eta_{k-1} W P_{\mathrm{A}}}}{\sigma_s^2} \mathbb{E}_{\varphi_{k,n|n-1}} \!\left[ \left| {\bm{v}}_{k,n}^{H} {\bm{H}}^{\mathrm{UL}}_{k,n} {\bm{\Theta}}^{\mathrm{R}}_{k,n} {\bm{h}}^{\mathrm{DL}}_{k,n}\right|^{2} \right]\!.
	\vspace{-1.5mm}
\end{equation}
In (\ref{SensingReceivedPower}), ${\bm{v}}_{k,n}^{H}$ is the receive beamforming vector, $\sigma_s^2$ is the noise power at the receive ULA, $\eta_{k-1} W$ denotes the matched-filtering gain, and ${W}$ is the number of symbols in one frame. 

Notice that non-line-of-sight (NLOS) links between the RSU and in-vehicle communication devices are practically negligible due to severe penetration loss, especially for high-frequency signals. Thus, devices are assumed to receive the signal through RSU-IRS-device links. At the $k$th time slot, the signal received at communication device $k$ is given by
\vspace{-1.5mm}
\begin{equation}
	y_{k,n}(t) =   \sqrt{P_{\mathrm{A}}} {\bm{h}}^T_k{\bm{\Theta}}_{k,n}^{\mathrm{T}}\bm{h}^{\mathrm{DL}}_{k,n} s_{k,n}(t) +    z_k(t),
	\vspace{-1.5mm}
\end{equation}
where ${{z}}_k(t)$ denotes the noise at the receive antennas. The signal-to-interference-plus-noise ratio (SINR) of device $k$ is expressed as $\gamma^{\mathrm{C}}_{k,n}  = {  P_{\mathrm{A}}|{\bm{h}}^T_k{\bm{\Theta}}^{\mathrm{T}}_{k,n} {\bm{h}}^{\mathrm{DL}}_{k,n}|^{2}}/{ \sigma_{c}^{2}}$, and its achievable rate in bps/Hz is given in expectation form since the phase shifts are designed based on the uncertain angle ${\tilde \varphi_{k,n}}$, i.e.,
\vspace{-1.5mm}
\begin{equation}\label{AchievablerateSC}
	R_{k,n} \!=\! \eta_k \mathbb{E}_{\tilde \varphi_{k,n}}\left[ \log _{2}(1+ \gamma^{\mathrm{C}}_{k,n})\right]  \!\le\! \eta_k \log _{2}\!\left(1+ \mathbb{E}_{\tilde \varphi_{k,n}}\left[\gamma^{\mathrm{C}}_{k,n}\right] \!\right) \!.
	\vspace{-1.5mm}
\end{equation}

\vspace{-3mm}
\subsection{Problem Formulation}
\vspace{-1mm}
In this work, the minimum achievable communication rate among devices is maximized subject to their sensing requirements by jointly optimizing IRS phase shifts and time allocation, i.e.,             
\vspace{-2.5mm}
\begin{alignat}{2}
	\label{P1}
	(\rm{P1}): \quad & \begin{array}{*{20}{c}}
		\mathop {\max }\limits_{\{{\bm{\Theta}}^{\mathrm{R}}_{k,n}\}, \{{\bm{\Theta}}^{{\mathrm{T}}}_{k,n}\}, {\bm{\eta}}} \quad   \mathop {\min }\limits_k \  R_{k,n}
	\end{array} & \\ 
	\mbox{s.t.}\quad
	& \theta^{\mathrm{T}}_{k,l,n}, \theta^{\mathrm{R}}_{k,l,n} \in [0, 2 \pi), \forall l \in {\cal{L}}, k \in {\cal{K}}, \tag{\ref{P1}a} \\
	& 	\gamma^{\mathrm{S}}_{k,n} \ge \gamma^{th}, \forall k \in {\cal{K}}, \tag{\ref{P1}b} \\
	&  \sum\nolimits_{k = 0}^K \eta_{k} = 1, \tag{\ref{P1}c} \\
	& \eta_{k} \in [0, 1], \forall k \in \{0,\cdots,K\}, & \tag{\ref{P1}d}
	\vspace{-1.5mm}
\end{alignat}
where ${\bm{\eta}} = [\eta_{0},\cdots,\eta_K]$. Constraint (\ref{P1}b) ensures that the power of echo signals is larger than $\gamma^{th}$, which is set according to the sensing requirement of practical applications. {{From the perspective of robustness, the threshold $\gamma^{th}$ can be set higher based on the potential system error, such as the position deviation of vehicles and the RSU.}} Solving (P1) optimally is non-trivial due to the closely coupled variables and the lack of closed-form objective function.

\section{Proposed Optimal Solution}
In this section, we first derive closed-form expressions of the echo SNR and achievable rate, based on which, the problem is optimally solved by the proposed polyblock-based algorithm. 

\subsection{Closed-form Expression of Achievable Rate}

First, it can be readily proved that the received power at the RSU is maximized when the echo reflected from the IRS is directed toward the RSU. Thus, the phase shift of the $l$th element of IRS $k$ in reflecting mode should be set based on the estimated angle $\varphi_{k,n|n-1}$, i.e.,
\begin{equation}
	\theta^{\mathrm{R}}_{k,l,n}=- 2 \pi(l-1)  \cos( \varphi_{k,n|n-1})+\theta_{0}, 
\end{equation}
where $\theta_{0}$ is the reference phase at the origin of the coordinates. Similarly, the phase shifts of all IRS elements are designed to align the refracting beam towards the communication device. Based on the obtained tracking angle $\tilde \varphi_{k,n}$, the phase shift of the $l$th element of IRS $k$ in refracting mode is given by $
	\theta^{\mathrm{T}}_{k,l,n}=\pi(l-1) (\cos( \varphi^{u}_{k,n})- \cos( \tilde \varphi_{k,n}))+\theta_{0}$.
Then, the passive beamforming gain in (\ref{SensingReceivedPower}) is given by
\vspace{-1mm}
\begin{equation}
	\left|\bm{a}^{T}_{\mathrm{IRS}}\left(-\varphi_{k,n}\right) {\bm{\Theta}} ^{\mathrm{T}}_{k,n}\bm{a}_{\mathrm{IRS}}\left(-\varphi_{k,n}\right)\right|^2 
	\buildrel \Delta \over =  L F_{L}\left( 2\Delta \cos \varphi_{k,n}\right) ,
\end{equation}
where $\Delta \cos \varphi_{k,n} = \cos \left( \varphi_{k,n|n-1} \right) - \cos \left( \varphi_{k,n}  \right)$ and the Fej$\acute{e}$r kernel $F_{L}(x) = \frac{1}{L}\left({\sin \frac{L \pi x}{2 }}/{\sin \frac{\pi x}{2 }}\right)^{2}$ \cite{rust2013convergence}. Similarly, the receive beamforming vector is set based on $\varphi_{k,n|n-1}$, and thus  $|{\bm{v}}_{k,n}^{H} {{\bm{b}}_{\mathrm{RSU}}( \varphi_{k,n})}|^2  = F_{M_r}(\Delta \cos \varphi_{k,n}) $. Based on the above analysis, at the receive ULA, the expected SNR of the echo signals reflected from IRS $k$ can be expressed as
\vspace{-1mm}
\begin{equation}\label{SensingPowerAverage}
	\begin{aligned}
		\gamma^{\mathrm{S}}_{k,n} = & \mathbb{E}_{\varphi_{k,n|n-1}}\left[ F_{L}\left( 2\Delta \cos \varphi_{k,n}\right) F_{M_r}\left(\Delta \cos \varphi_{k,n}\right)\right] \\
		& \times  \eta_{k-1} { W {P_{\mathrm{A}}} {\beta^2_{G,k,n}} L  } \left({\sigma_s^2}\right)^{-1},
	\end{aligned}
\vspace{-1mm}
\end{equation}
where $\beta_{G,k,n} \! \approx \! \beta_0 d_{k,n|n-1}^{-2}$ due to unavailable real distance. However, it is intractable to optimize time resources due to the expectation operation in (\ref{SensingPowerAverage}). To resolve this issue, we present the following proposition. 

\begin{Pro}\label{InftyBand}
	When $L \to \infty$, the expectation of SNR at the receive ULA is approximated by 
	\vspace{-1.5mm}
	\begin{equation}\label{ClosedFormSensingPower}
		\gamma^{\mathrm{S}}_{k,n} \approx \eta_{k-1} \frac{  W{P_{\mathrm{A}}}  \beta_{G,k,n}^2 L  M_r h(\varphi_{k,n|n-1}, \sigma^2_{\omega_\varphi}) }{{\sigma_s^2} } ,
	\end{equation}
	where $h(x, y) \! \buildrel \Delta \over =  \!\frac{1}{{\sqrt {2\pi {y}\sin^2({x})}  }} \sum\limits_{i = -\infty}^{\infty}  \!\left( \! {{e^{ - \frac{{{{2( {i\pi} )}^2}}}{{{y}}}}}  \!+ \! {e^{ - \frac{{{{2\left( {(i + 1)\pi  } - x \right)}^2}}}{{{y}}}}}}  \!\right) \!$.
\end{Pro}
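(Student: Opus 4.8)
The plan is to compute the expectation in (\ref{SensingPowerAverage}) exactly in the limit $L \to \infty$, exploiting the fact that the Fej\'er kernel $F_L$ collapses to a Dirac comb and thereby concentrates the Gaussian angle-average onto the discrete set of errors for which the reflected beam is perfectly realigned with the RSU. First I would recast the average as an integral over the Gaussian angle-prediction error $\omega$, of variance $y := \sigma^2_{\omega_\varphi}$ by (\ref{StateEsitmation}), holding the known prediction $x := \varphi_{k,n|n-1}$ fixed so that $\varphi_{k,n} = x - \omega$. Then the beam-misalignment term is $\Delta \cos \varphi_{k,n} = g(\omega) := \cos(x) - \cos(x - \omega)$, and the quantity to evaluate is $I_L := \int_{\mathbb{R}} F_L\!\left(2 g(\omega)\right) F_{M_r}\!\left(g(\omega)\right) \frac{1}{\sqrt{2\pi y}} e^{-\omega^2/(2y)} \, d\omega$, so the claim reduces to $I_L \to M_r\, h(x,y)$.

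Next I would invoke the asymptotics of the Fej\'er kernel. Since $F_L$ is $2$-periodic with $\int_{-1}^{1} F_L(t)\,dt = 2$ and its mass concentrates at the even integers, one has $F_L(2t) \to \sum_{m \in \mathbb{Z}} \delta(t - m)$ in the distributional sense as $L \to \infty$. Substituting this into $I_L$ and applying the Dirac-delta change-of-variables identity $\int \delta\!\left(g(\omega)-m\right)\phi(\omega)\,d\omega = \sum_{\omega_\ast:\, g(\omega_\ast)=m} \phi(\omega_\ast)/|g'(\omega_\ast)|$ converts the integral into a double sum over the integer lobe index $m$ and over the roots of $g(\omega)=m$, each root weighted by $F_{M_r}(m)$ times the Gaussian density and divided by $|g'(\omega_\ast)|$.

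The decisive step is to retain only the main lobe $m=0$. The odd lobes carry weight $F_{M_r}(\pm 1)=\sin^2(M_r\pi/2)/M_r$, which is negligible against $F_{M_r}(0)=M_r$, while $|m|\ge 2$ would require $\cos x = \pm 1$ and is excluded generically. For $m=0$ the equation $\cos(x-\omega)=\cos(x)$ has the two root families $\omega = 2\pi k$ and $\omega = 2x - 2\pi k$, $k \in \mathbb{Z}$, and at \emph{every} such root $|g'(\omega_\ast)| = |\sin(x-\omega_\ast)| = |\sin x|$, so the Jacobian pulls out of the sum as $1/|\sin x| = 1/\sqrt{\sin^2 x}$. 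Evaluating the Gaussian density at the two families then yields the series $\sum_k e^{-2(k\pi)^2/y}$ and $\sum_k e^{-2(x-k\pi)^2/y}$; after relabeling the second index these become exactly the two bracketed series defining $h(x,y)$, and multiplying by the surviving $F_{M_r}(0)=M_r$ gives $I_L \to M_r\, h(x,y)$. Restoring the prefactor $\eta_{k-1} W P_{\mathrm{A}} \beta_{G,k,n}^2 L/\sigma_s^2$ from (\ref{SensingPowerAverage}) produces (\ref{ClosedFormSensingPower}).

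I expect the main obstacle to be the rigorous control of the limit rather than the algebra: one must justify that for large but finite $L$ the contributions away from the main lobe are uniformly small against the smooth Gaussian weight, and that the infinite sum over each root family converges, which it does, since the Gaussian weight forces a decay like $e^{-2\pi^2 k^2/y}$. The one genuinely load-bearing computation is the verification that $|g'|$ takes the \emph{same} value $|\sin x|$ at every root of both families, since this is precisely what allows the Jacobian to factor out and produce the clean normalization $1/\sqrt{2\pi y \sin^2 x}$.
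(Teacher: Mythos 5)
Your argument is correct and is essentially the paper's own proof in different notation: the paper changes variables to $y = 2\Delta\cos\varphi_{k,n}$, writes its pushforward density $P_\varphi(y)$ (which is exactly your sum over the roots of $g$ weighted by the Jacobian $1/|g'| = 1/|\sin x|$), and applies the Fej\'er concentration lemma to extract $2M_r P_\varphi(0) = M_r\, h(x,y)$, matching your $m=0$ Dirac-comb contribution term by term. Your explicit justification for discarding the $m=\pm 1$ lobes via $F_{M_r}(\pm 1)\le 1/M_r$ and your verification that $|g'|=|\sin x|$ at every root are, if anything, slightly more careful than the paper's presentation, but the route is the same.
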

\begin{proof}
	Please refer to Appendix A.
\end{proof}

Proposition \ref{InftyBand} provides a tight approximation for the SNR of echo signals under a sufficiently large number of IRS elements, as shown later in practical setups in Section \ref{SimulationSection}. According to (\ref{ReceivedPowerEquation}), $\sigma_{z_{\varphi_{k,n}}}^{2} \!\!\!=\! {{\sigma _R^2}}/{{\gamma^{\mathrm{S}}_{k,n}{{\sin }^2}(\varphi_{k,n|n-1}) }} \!= \! {A_{\varphi_{k,n}}}/{\eta_{k-1}}$, where $\sigma _R^2$ is the variance parameter of the estimation method, and $A_{\varphi_{k,n}} = \frac{   \sigma_s^2 \sigma _R^2 }{{W {P_{\mathrm{A}}}  \beta_{G,k,n}^2 L M_r h(\varphi_{k,n|n-1}, \sigma^2_{\omega_\varphi}) {{\sin^2 }}\varphi_{k,n|n-1} } }$ represents the angle variance when $\eta_{k-1} = 1$.
In $h(x,y)$, it can be readily proved that the term $ {{e^{ - \frac{{{{2\left( {i\pi } \right)}^2}}}{{y}}}} + {e^{ - \frac{{{{2( {(i + 1)\pi } - \varphi_{n|n-1} )}^2}}}{{{y}}}}}} \ll 1$ when $i \ge 1$ or $i \le -1$ and it is negligible, i.e.,
\vspace{-2mm}
\begin{equation}\label{ApproximationEquation}
	h(x,y) \!\approx\! \left({2\pi{y} \sin^2({x})}\right)^{-\frac{1}{2}}\!\left({1 + {e^{ - \frac{{{{2\left( {\pi  - {x} } \right)}^2}}}{{y}}}}} \! \right)\! \buildrel \Delta \over=\! \tilde h(x, y) .
	\vspace{-1.5mm}
\end{equation} 
Similar to the derivation in Proposition \ref{InftyBand}, the achievable rate of device $k$ can be approximated by 
\vspace{-1.5mm}
\begin{equation}
	\tilde R_{k,n} = \eta_{k} \log_2\left( 1 +  C_k  \tilde h(\varphi_{k,n|n-1}, \sigma^2_{\tilde \varphi_{k,n}}) \right),
	\vspace{-1.5mm}
\end{equation}
where $C_k = \frac{ 2 P_{\mathrm{A}}{\beta_{G,k,n}} {\beta_{h,k}}    L }{\sigma_{c}^{2}} $ and $ \sigma^2_{\tilde \varphi_{k,n}} =  \frac{\sigma _{{\omega_{\varphi}}} ^2 A_{\varphi_{k,n}}}{\sigma _{{\omega_{\varphi}}} ^2 {\eta_{k-1}} + {A_{\varphi_{k,n}}}}$. Then, (P1) can be recast in approximate form as  
\vspace{-2mm}
\begin{alignat}{2}
	\label{P2}
	(\rm{P2}): \quad & \begin{array}{*{20}{c}}
		\mathop {\max }\limits_{{\bm{\eta}}} \quad   \mathop {\min }\limits_k \ \tilde R_{k,n}
	\end{array} & \\ 
	\mbox{s.t.}\quad
	& (\ref{P1}b) - (\ref{P1}d). \nonumber
	\vspace{-1mm}
\end{alignat}

\noindent It is difficult to obtain the optimal solution to (P2) due to the non-convex objective function and the closely coupled variables in (\ref{P1}b).

\vspace{-1mm}
\subsection{Proposed Optimal Solution to (P2)}
\vspace{-0.5mm}
In this subsection, we present a feasibility analysis and then propose a polyblock-based algorithm to optimally solve (P2).
\begin{Pro}\label{OptimalFeasibleCheck}
	There is a feasible solution to problem (P2) if and only if 
	\vspace{-2mm}
	\begin{equation}
		\gamma^{th} \le \frac{  W {P_{\mathrm{A}}}  \beta_{G,k,n}^2 L  M_r h(\varphi_{k,n|n-1}, \sigma^2_{\omega_\varphi}) }{{ \sigma_s^2}\left(1 + \sum\nolimits_{k = 2}^{K} \frac{\beta_{G,1,n}h(\varphi_{1,n|n-1}, \sigma^2_{\omega_\varphi})}{{\beta_{G,k,n}h(\varphi_{k,n|n-1}, \sigma^2_{\omega_\varphi})}}\right) }.
		\vspace{-1.5mm}
	\end{equation}
\end{Pro}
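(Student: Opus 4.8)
The plan is to recognize that, once the IRS phase shifts are fixed by their closed-form optimal designs, feasibility of (P2) is governed entirely by the time-allocation constraints (\ref{P1}b)--(\ref{P1}d); the objective plays no role in whether the feasible set is empty. First I would substitute the closed-form echo SNR (\ref{ClosedFormSensingPower}) into the sensing constraint (\ref{P1}b). Because $\gamma^{\mathrm{S}}_{k,n}$ is \emph{linear} in $\eta_{k-1}$, each sensing requirement collapses to a simple lower bound on a single time-fraction: writing $B_k \buildrel \Delta \over= \frac{\gamma^{th}\sigma_s^2}{W P_{\mathrm{A}}\beta_{G,k,n}^2 L M_r h(\varphi_{k,n|n-1},\sigma^2_{\omega_\varphi})}$, constraint (\ref{P1}b) for vehicle $k$ is equivalent to $\eta_{k-1}\ge B_k$ for every $k\in\mathcal{K}$, i.e. lower bounds on $\eta_0,\dots,\eta_{K-1}$.

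The crucial structural observation is that the last time-fraction $\eta_K$ appears in \emph{no} sensing constraint---the final slot is dedicated purely to communication of vehicle $K$---so it acts as a free slack variable absorbing whatever time is left over. This decouples the feasibility question to: does there exist a nonnegative $\bm{\eta}$ with $\sum_{i=0}^{K}\eta_i=1$ obeying $\eta_{k-1}\ge B_k$? For necessity, any feasible point satisfies $\sum_{k=1}^{K} B_k \le \sum_{i=0}^{K-1}\eta_i \le \sum_{i=0}^{K}\eta_i = 1$, where the last inequality uses $\eta_K\ge 0$. For sufficiency, when $\sum_{k=1}^{K}B_k\le 1$ I would exhibit the explicit feasible point $\eta_{k-1}=B_k$ for $k\in\mathcal{K}$ and $\eta_K = 1-\sum_{k=1}^{K}B_k\ge 0$; since each $B_k\ge 0$ and their sum is at most one, every $B_k\in[0,1]$ and $\eta_K\in[0,1]$, so (\ref{P1}c)--(\ref{P1}d) hold while (\ref{P1}b) holds with equality. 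Hence (P2) admits a feasible solution if and only if $\sum_{k=1}^{K}B_k\le 1$.

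The remaining work is purely algebraic: I would rewrite $\sum_{k=1}^{K}B_k\le 1$ as $\gamma^{th}\le \frac{W P_{\mathrm{A}} L M_r}{\sigma_s^2\,\sum_{k=1}^{K}\left(\beta_{G,k,n}^2 h(\varphi_{k,n|n-1},\sigma^2_{\omega_\varphi})\right)^{-1}}$ and then factor the $k=1$ term out of the sum, using $\sum_{k=1}^{K}\frac{1}{\beta_{G,k,n}^2 h_k}=\frac{1}{\beta_{G,1,n}^2 h_1}\bigl(1+\sum_{k=2}^{K}\frac{\beta_{G,1,n}^2 h_1}{\beta_{G,k,n}^2 h_k}\bigr)$ with the shorthand $h_k\buildrel \Delta \over= h(\varphi_{k,n|n-1},\sigma^2_{\omega_\varphi})$. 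This reproduces exactly the stated right-hand side, with vehicle $k=1$ serving as the reference index in the numerator and in the denominator sum.

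I expect the only genuinely delicate point to be establishing the \emph{if-and-only-if} rather than mere necessity: one must verify that dumping all the slack into $\eta_K$ yields a point that simultaneously meets the equality $\sum_i\eta_i=1$ and the box constraints (\ref{P1}d). This is precisely where the nonnegativity $B_k\ge 0$ together with the summed bound $\sum_k B_k\le 1$ is invoked, to guarantee $\eta_K\in[0,1]$ and each $\eta_{k-1}\in[0,1]$. Everything else follows directly from the linearity of the closed-form SNR in $\eta_{k-1}$ and the fact that the communication objective is immaterial to feasibility.
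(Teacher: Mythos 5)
Your proof is correct, and it reaches the stated condition by a somewhat different route than the paper. The paper argues via a max--min equalization: it asserts that the largest achievable $\min_k \gamma^{\mathrm{S}}_{k,n}$ is attained when $\eta_K=0$ and all sensing SNRs are equal, solves the resulting chain of equalities ${\eta_{k-1}}{\beta_{G,k,n}h_k}={\eta_{k}}{\beta_{G,k+1,n}h_{k+1}}$ together with $\sum_k \eta_k=1$ for $\eta_0$, and then compares $\gamma^{th}$ against the resulting maximum SNR. You instead observe that each constraint in (\ref{P1}b) is a lower bound $\eta_{k-1}\ge B_k$ on a single time fraction, that $\eta_K$ is unconstrained slack, and that feasibility is therefore exactly $\sum_{k=1}^K B_k\le 1$; factoring out the $k=1$ term gives the stated bound. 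The two arguments are dual views of the same linear-programming feasibility check, but yours is the more self-contained one: the paper's claim that the SNRs must be equalized at the optimum is left unjustified (it is the standard argument that any slack in one slot could be redistributed to raise the minimum), whereas your necessity/sufficiency argument with the explicit witness $\eta_{k-1}=B_k$, $\eta_K=1-\sum_k B_k$ is complete as written. One caveat: your algebra yields ratios $\beta_{G,1,n}^2 h_1/(\beta_{G,k,n}^2 h_k)$ with \emph{squared} channel gains, which is what actually follows from the closed-form SNR in (\ref{ClosedFormSensingPower}), whereas the proposition as printed (and the paper's intermediate equalization identity) carries unsquared $\beta$'s and a dangling index $k$ in the numerator that should read $k=1$; so your expression does not literally ``reproduce exactly the stated right-hand side'' --- it corrects what appear to be typos in it.
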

\begin{proof}
	It can be readily proved that under the maximum feasible sensing SNR, $\eta_K = 0$, since the $K$th time slot is not designed for sensing. Also, in this case, $\gamma^{\mathrm{S}}_k = \gamma^{\mathrm{S}}_j$, $\forall k \ne j$ at the optimal solution of (P2). Thus, it follows that
	\vspace{-1.5mm}
	\begin{equation}
		{\eta_{k-1}}{\beta_{G,k,n}h(\varphi_{k,n|n-1}, \sigma^2_{\omega_\varphi}\!)} \!=\! {\eta_{k}}{\beta_{G,k+1,n}h(\varphi_{k+1,n|n-1}, \sigma^2_{\omega_\varphi}\!)}.
		\vspace{-1.5mm}
	\end{equation}
	Then, $\sum\nolimits_{k = 0}^{K-1} \eta_k \!=\! \eta_0 \!+\! \sum\nolimits_{k = 2}^{K} \! \frac{{\eta_{0}}{\beta_{G,1,n}h(\varphi_{1,n|n-1}, \sigma^2_{\omega_\varphi})}}{{\beta_{G,k,n}h(\varphi_{k,n|n-1}, \sigma^2_{\omega_\varphi})}} \!=\! 1$, and $\eta_0 \!=\! \frac{1}{1 + \sum\nolimits_{k = 2}^{K}\! \frac{\beta_{G,1,n}h(\varphi_{1,n|n-1}, \sigma^2_{\omega_\varphi})}{{\beta_{G,k,n}h(\varphi_{k,n|n-1}, \sigma^2_{\omega_\varphi})}}}$. Hence, the maximum SNR of echo signals $\gamma^{\mathrm{S}}_1 \!=\! \eta_{0} \frac{  W {P_{\mathrm{A}}}  \beta_{G,1,n}^2 L  M_r h(\varphi_{1,n|n \! -\!1}, \sigma^2_{\omega_\varphi}) }{{ \sigma_s^2} } $.
\end{proof}

Next, we prove that (P2) is essentially a monotonic optimization (MO) problem.

\begin{Pro}\label{VarianceMonotonicity}
	As $\eta_{k-1}$ or $\eta_k$ increases, $R_k$ increases monotonically.
\end{Pro}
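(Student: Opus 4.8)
The plan is to treat the two monotonicity claims separately, exploiting the fact that the approximate rate factorizes as $\tilde R_{k,n} = \eta_k\,\Phi(\eta_{k-1})$, where
\[
\Phi(\eta_{k-1}) \buildrel \Delta \over= \log_2\!\big(1 + C_k\,\tilde h(\varphi_{k,n|n-1}, \sigma^2_{\tilde\varphi_{k,n}})\big),
\]
and the only dependence on $\eta_{k-1}$ enters through the tracking variance $\sigma^2_{\tilde\varphi_{k,n}}$ (recall $C_k$ and $A_{\varphi_{k,n}}$ are constants in $\bm{\eta}$). Monotonicity in $\eta_k$ is then immediate: since $C_k>0$ and $\tilde h>0$ give $\Phi>0$, and $\Phi$ is independent of $\eta_k$, the rate is linear in $\eta_k$ with strictly positive slope. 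The substance of the proposition is therefore the dependence on $\eta_{k-1}$.

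For $\eta_{k-1}$, I would first read off from $\sigma^2_{\tilde\varphi_{k,n}} = \sigma_{\omega_\varphi}^2 A_{\varphi_{k,n}} / (\sigma_{\omega_\varphi}^2 \eta_{k-1} + A_{\varphi_{k,n}})$ that, since $A_{\varphi_{k,n}} > 0$, the variance is strictly decreasing in $\eta_{k-1}$ — more sensing time sharpens the tracking estimate. Because $C_k>0$ and $\log_2(1+\cdot)$ is increasing, it then suffices to show that $\tilde h(x,y)$ is strictly \emph{decreasing} in its second argument $y$; composing the decreasing map $y\mapsto \tilde h$ with the decreasing map $\eta_{k-1}\mapsto \sigma^2_{\tilde\varphi_{k,n}}$ yields a function increasing in $\eta_{k-1}$, and reattaching the positive factor $\eta_k$ completes the argument.

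The crux — and the only real obstacle — is thus establishing $\partial_y \tilde h < 0$. Writing $\tilde h(x,y) = (2\pi\sin^2 x)^{-1/2}\,g(y)$ with $g(y) = y^{-1/2}(1 + e^{-c/y})$ and $c = 2(\pi-x)^2 \ge 0$, I would differentiate to obtain
\[
g'(y) = y^{-5/2}\Big[c\,e^{-c/y} - \tfrac{y}{2}\big(1 + e^{-c/y}\big)\Big].
\]
Substituting $t = c/y \ge 0$, the bracket is negative precisely when $\tfrac{2t}{e^{t}+1} < 1$. This scalar inequality holds for all $t\ge 0$: its left-hand side vanishes at $t=0$ and as $t\to\infty$, and its unique interior maximum, attained at the root of $e^t(t-1)=1$, is bounded by roughly $0.56<1$. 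Hence $g'(y)<0$, so $\tilde h$ is strictly decreasing in $y$, which supplies the missing ingredient. I expect the verification of this bound to be the one nontrivial step, while the remaining manipulations are routine.
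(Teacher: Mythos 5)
Your proof is correct and follows essentially the same route as the paper's: factor out the linear dependence on $\eta_k$, reduce everything to the monotonicity of $\tilde h$ in $\eta_{k-1}$, and verify a scalar inequality that is algebraically identical to the paper's --- your $\tfrac{2t}{e^{t}+1}<1$ is exactly the paper's condition $\tfrac12+\bigl(\tfrac12-v\bigr)e^{-v}>0$ after multiplying through by $2e^{v}$. The only cosmetic differences are that you differentiate $\tilde h$ in $y$ and compose with the decreasing map $\eta_{k-1}\mapsto\sigma^2_{\tilde\varphi_{k,n}}$ (spelling out the chain more carefully than the paper's terse ``affine transformation'' remark), and that you certify the scalar inequality by locating the maximum of $2t/(e^{t}+1)$ rather than invoking the Lambert-$W$ bound $xe^{x}\ge -1/e$.
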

\begin{proof}
	Please refer to Appendix B.
\end{proof}

Proposition \ref{VarianceMonotonicity} implies that (P2) is an MO problem with respect to ${\bm{\eta}}$, which can be optimally solved based on the framework of the Polyblock algorithm \cite{zhang2013monotonic}. Specifically, in the $r$th iteration of the proposed algorithm, the vector ${\bm{\eta}}^{(r)} = [\eta_0^{(r)}, \cdots, \eta_K^{(r)}]$ corresponding to the maximum achievable rate $\bar R^{(r)}$ is selected, and then its projection point $\Phi\left({\bm{\eta}}^{(r)}\right) = [\Phi_0\left({\bm{\eta}}^{(r)}\right),\cdots,\Phi_{K}\left({\bm{\eta}}^{(r)}\right)]$ is calculated, where the $k$th coordinate of the projection point can be expressed as
\vspace{-1.5mm}
\begin{equation}\label{ProjectionPoint}
\Phi_k\left({\bm{\eta}}^{(r)}\right) = \left(\eta_k^{(r)} - \underline \eta_k\right)\frac{1-{\sum\nolimits_{k = 0}^K} \underline \eta_k}{ {\sum\nolimits_{k = 0}^K}\eta_k^{(r)} - {\sum\nolimits_{k = 0}^K}\underline \eta_k} + \underline \eta_k,
	\vspace{-1.5mm}
\end{equation}
with $\underline \eta_k$ denoting the lower bound of the ratio of time slot $k$. A smaller polyblock ${\cal{Q}}^{(r+1)}$ can be constructed by replacing the vertices ${\bm{\eta}}^{(r)}$ with the newly generated vectors in polyblock ${\cal{Q}}^{(r)}$, and it converges until the upper bound of the achievable rate $\bar R^{(r)}$ approaches to the achievable rate of the projection point $\Phi\left({\bm{\eta}}^{(r)}\right)$. To reduce the search region, based on the constraints in (\ref{P1}b), a lower bound of $\eta_{k-1}$ can be given by 
\vspace{-2mm}
\begin{equation}\label{LowerBountEtaValue}
	\underline \eta_{k-1} \!= \! \gamma^{th} {{ \sigma_s^2} } \!\left( \!{W {P_{\mathrm{A}}}  \beta_{G,k,n}^2 L M_r h(\varphi_{k,n|n-1}, \sigma^2_{\omega_\varphi}) } \!\right)^{-1} \!\!.
	\vspace{-1mm}
\end{equation}
According to (\ref{LowerBountEtaValue}), an upper bound of $\eta_k$ is $\bar \eta_k = 1 - \sum\nolimits_{k' \ne k}^K \underline \eta_{k'}$. Then, (P2) can be solved more efficiently by searching $\eta_k$ within a smaller range $[\underline \eta_k, \bar \eta_k]$. The details of the proposed algorithm are provided in Algorithm \ref{MONObasedAlgorithm}.

\begin{algorithm}[t]
	\vspace{-0.2mm}
	\caption{Polyblock-Based Optimal Algorithm}
	\small
	\label{MONObasedAlgorithm}
	\begin{algorithmic}[1]
		\STATE Initialize the lower bound $\underline \eta_k$ and the upper bound $\bar \eta_k$ based on (\ref{LowerBountEtaValue}), put vector $ {\bm{\eta}}^{(r)} = [\bar \eta_0, \cdots, \bar \eta_K]$ into vertex set ${{\cal{W}}^{(r)}}$, and construct polyblock ${\cal{Q}}^{(r)}$ based on ${{\cal{W}}^{(r)}}$. Set $r = 0$, $\bar{R}^{(r)} = \infty$
		\STATE Let ${\rm{CBV}}^{(r)} = 0$ and set the convergence accuracy $\epsilon$				
		\WHILE {$\frac{|{\rm{CBV}}^{(r)}-\bar{R}^{(r)}|}{{\rm{CBV}}^{(r-1)}} > \epsilon$}
		\STATE Select the vertex ${\bm{\eta}}^{(r)}$ with the maximum achievable rate from ${\cal{W}}^{(r)}$. Calculate projection point $\Phi({\bm{\eta}}^{(r)})$ according to (\ref{ProjectionPoint})
		\STATE Obtain ${\rm{CBV}}^{(r+1)} = \mathop {\arg \min }\limits_k R_{k,n}$ under time allocation $\Phi({\bm{\eta}}^{(r)})$; construct a smaller polyblock ${\cal{Q}}^{(r+1)}$ by replacing vertices ${\bm{\eta}}^{(r)}$ in ${{\cal{W}}^{(r)}}$ 	
		\STATE Find vertex 
		${{ {\bm{\eta}}}^{(r+1)}} = \mathop {\arg \max }\limits_{{ {\bm{\eta}}} \in {{\cal{W}}^{(r+1)}}} \left\{ {R({ {\bm{\eta}}})} \right\}$, where $R({ {\bm{\eta}}})$ is the achievable rate under ${\bm{\eta}}$. The upper bound of achievable rate is $\bar{R}^{(r+1)} = R({{ {\bm{\eta}}}^{(r+1)}})$
		\STATE $r = r+1$
		\ENDWHILE
		\vspace{-0.5mm}
	\end{algorithmic}
\end{algorithm}

\vspace{-1mm}
\section{Simulations}
\vspace{-1mm}
\label{SimulationSection}
In this section, simulation results are provided for characterizing the performance of the proposed mutual assistance scheme. The system parameters are given as follows: $M_r \! =\! 10$, $L \!=\! 100$, $K \!=\! 3$, $\beta_0 \!=\! -30$dB, $\sigma^2_c \!=\! \sigma^2_s \!=\! -80$dB, $P_{\mathrm{A}} \!=\! 0.1$W, $\sigma^2_{\omega_{\varphi}} \!=\! 0.01$, $\sigma^2_R \!=\! 0.1$, $\epsilon \!= \!10^{-3}$, $\Delta T \!=\! 0.1$s, $W \!=\! 10^4$, and $\gamma^{th} \!=\! 10^3$. The RSU is located at (0m, 0m, 10m) and vehicles drive from ($-50$m, $-20$m, 0m) to (50m, $-20$m, 0m) at a speed of 15m/s. The proposed scheme is compared with four benchmarks as follows, and the corresponding benchmark protocols are shown in Fig.~\ref{figure1b}).
\vspace{-1mm}
\begin{itemize}[leftmargin=*]
	\item {\bf{Without sensing assistance (w/o s-assistance)}}: The phase shifts are designed based on the estimation angle $\varphi_{k,n|n-1}$.
	\item {\bf{Without communication assistance (w/o c-assistance)}}: Sensing results are used for phase shift design, but the S\&C tasks of vehicles are performed in a time-division manner.
	\item {\bf{Without S\&C assistance (w/o s\&c-assistance)}}: S\&C tasks are performed separately without mutual assistance.
	\item {\bf{Random phase shifts}}: Different from the proposed protocol, the IRS phase shifts are designed randomly.
\end{itemize}
\vspace{-0.5mm}
In Fig.~\ref{figure2}, the SNR of echo signals of Monte Carlo simulation approaches to that of the closed-form expression derived in Proposition 1, even for $L=50$, which confirms its tightness
in practical setups. In Fig.~\ref{figure3}, it is observed that the achievable rate obtained by all schemes decreases as the sensing requirement $\gamma^{th}$ increases, which is indeed expected since lager $\gamma^{th}$ values impose tighter constraints on the time allocation for communication. It can be seen from Fig.~\ref{figure3} that the achievable rate of the proposed scheme is significantly higher than that of the w/o s-assistance scheme, especially when the sensing requirement $\gamma^{th}$ is lower, since the proposed scheme can take the optimal time allocation to enlarge the communication gain brought by sensing. Moreover, under a given achievable rate of 4.9 bps/Hz, the sensing performance of the proposed scheme is significantly higher than that of the w/o c-assistance scheme. The main reason is that more information signals are utilized to facilitate localization, thereby reducing the time consumption for dedicated sensing. 

Fig.~\ref{figure4} shows that the achievable rate of the proposed scheme is significantly larger compared to the w/o s-assistance scheme when the maximum transmit power is larger. This is because a larger transmit power used for dedicated sensing can achieve a higher communication gain brought by sensing. It is observed from Fig.~\ref{figure4} that with a given communication rate, e.g., 3.3 bps/Hz, the proposed scheme can reduce power consumption by half as compared to the w/o c-assistance scheme by exploiting the mutual assistance between S\&C. Moreover, the achievable rate of the proposed scheme is improved significantly compared to the w/o c-assistance scheme when the maximum transmit power is smaller. The main reason is that for the w/o c-assistance scheme, more time needs to be allocated for dedicated sensing since the signals transmitted to other vehicles are not utilized for sensing. 
\begin{figure*}[!t]
	\begin{minipage}[t]{0.33\linewidth}
		\centering
		\includegraphics[width=5.8cm]{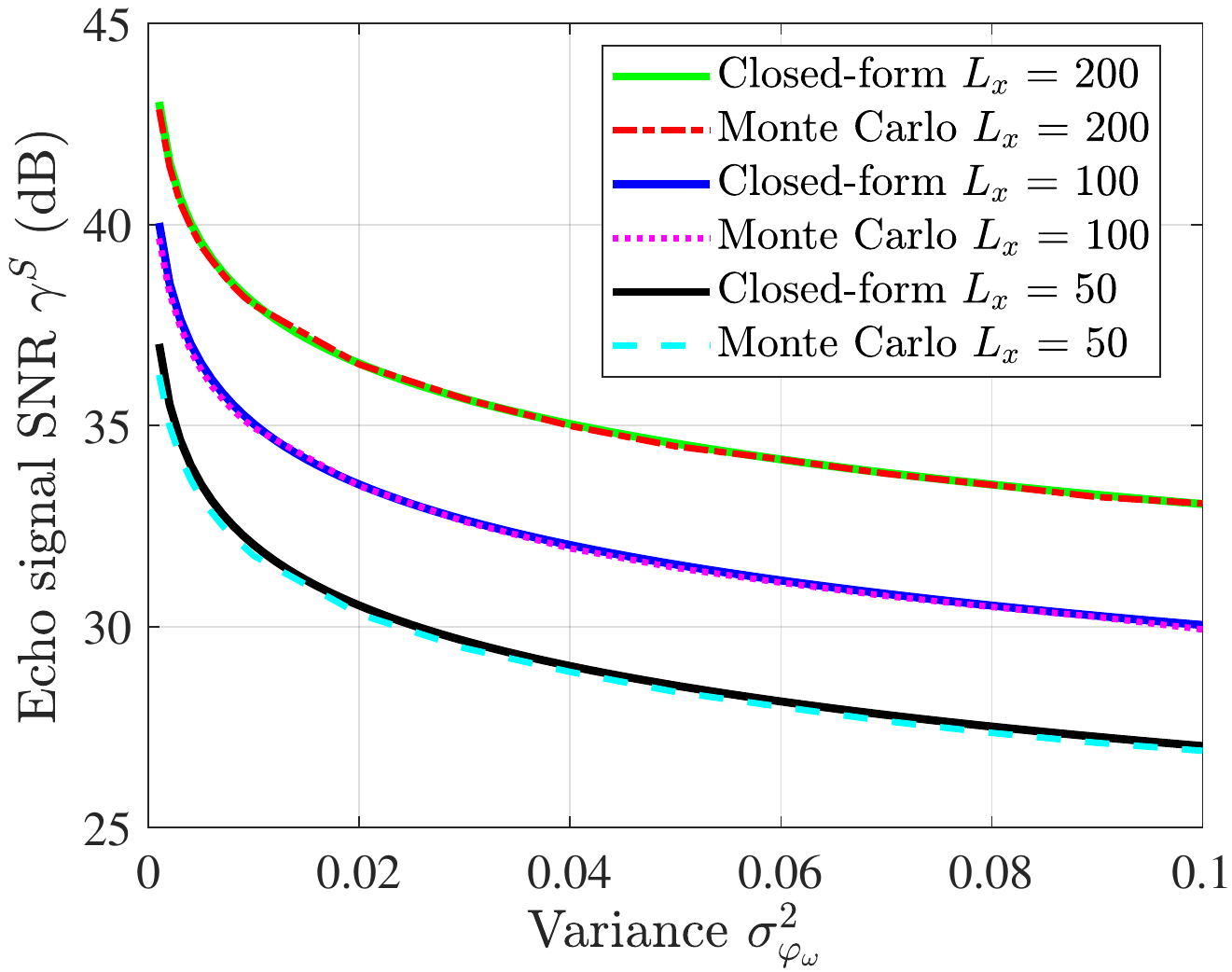}
		\vspace{-3mm}
		\caption{Evaluation of the derived expression.}
		\label{figure2}
	\end{minipage}%
	\begin{minipage}[t]{0.33\linewidth}
		\centering
		\includegraphics[width=5.8cm]{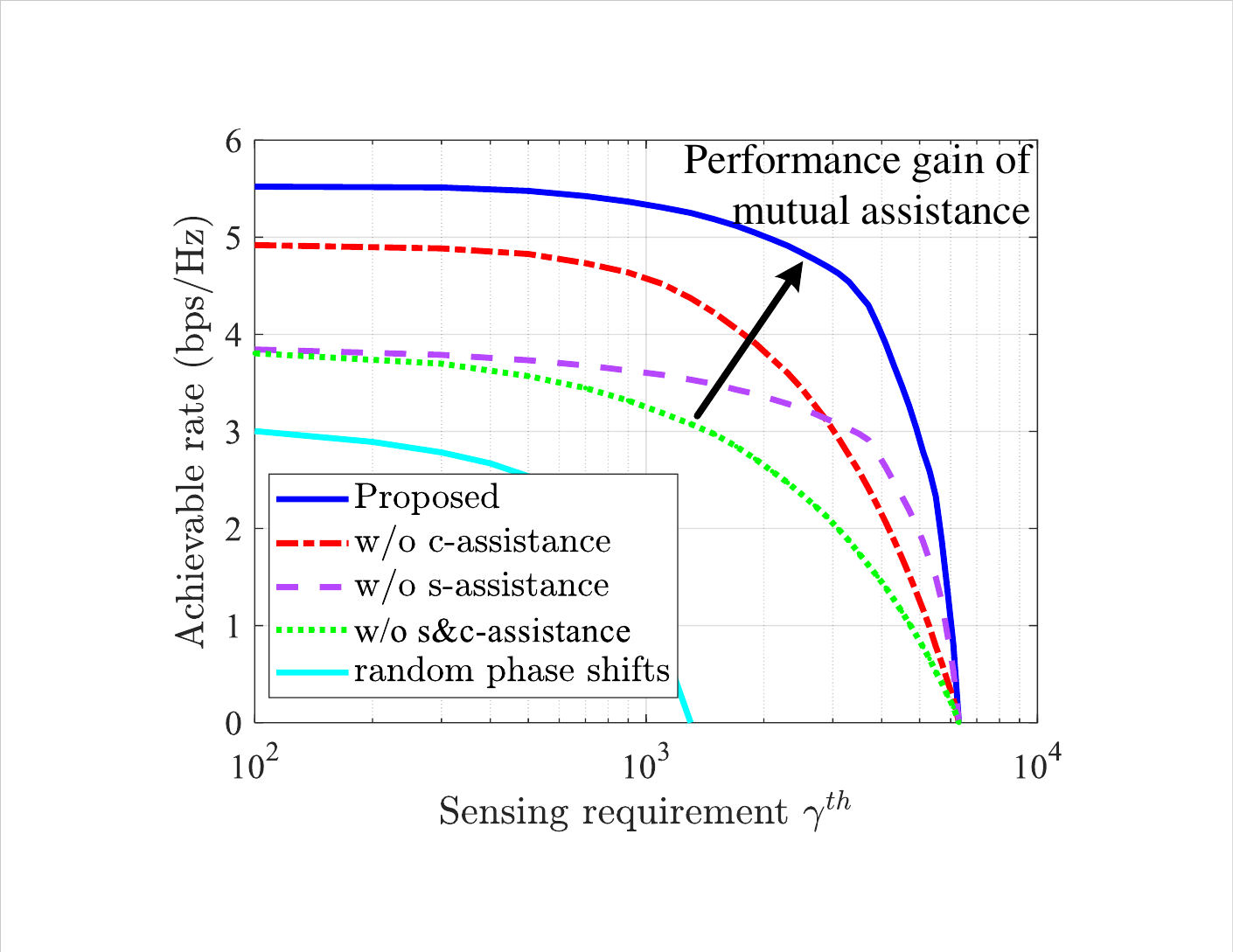}
		\vspace{-3mm}
		\caption{Achievable rate versus sensing requirement.}
		\label{figure3}
	\end{minipage}\hspace{1.5mm}
	\begin{minipage}[t]{0.33\linewidth}
		\centering
		\includegraphics[width=5.8cm]{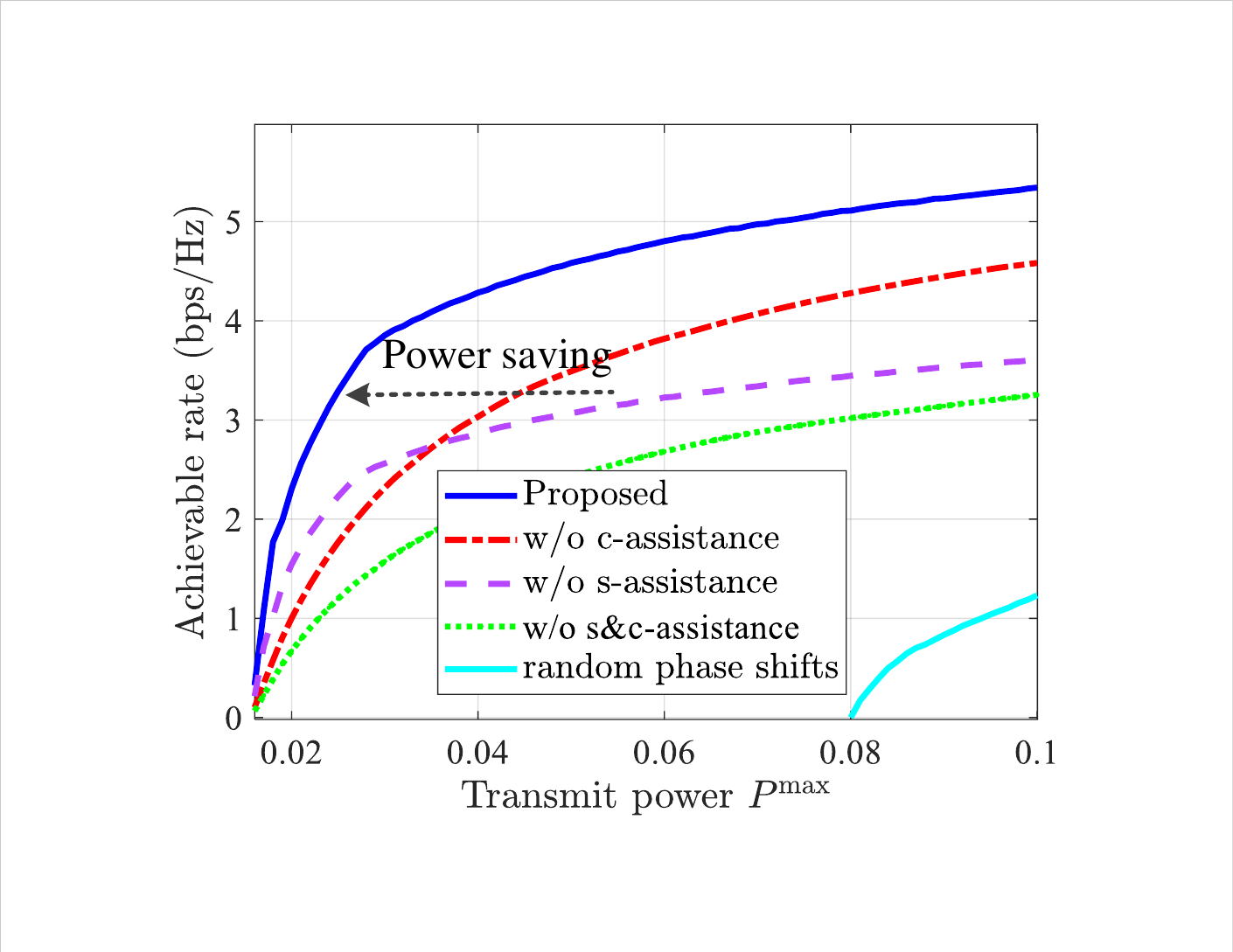}
		\vspace{-3mm}
		\caption{Achievable rate versus transmit power.}
		\label{figure4}
	\end{minipage}
\vspace{-4mm}
\end{figure*}

\vspace{-2mm}
\section{Conclusions and Future Works}
In this work, a novel mutual assistance scheme for S\&C was proposed to exploit coordination gain between S\&C in IRS-aided vehicular networks, where IRSs are set into reflecting/refracting modes for sensing/communication. First, a closed-form expression of the achievable rate was derived under uncertain angle information. Then, the search region of the proposed algorithm is reduced to facilitate the solution of the problem. Finally, simulation results demonstrated that the S\&C trade-off region was effectively enlarged through the mutual assistance design in IRS-aided vehicular networks. The scenarios with the multiple transmit antennas RSU considering the road of arbitrary geometry are worthwhile future works.

\begin{figure*}[!b]
	\vspace{-2mm}
	\begin{align}\label{ProbablityEquation}
		P_\varphi(y) \!= \! \frac{1}{{2{\sqrt {2\pi } {\sigma_{\omega_\varphi}}}\sqrt{ {1 \!-\! {{\left( {{\frac{y}{2}} \!+ \cos ( {{\varphi_{k,n}}} )} \right)}^2}} }}} \!\sum\limits_{i = -\infty}^{\infty} \!\!\left(\! {{e^{ \!- \frac{{{{\left( {2(i + 1)\pi  -  {\arccos ( {{\frac{y}{2}} + \cos ( {{\varphi_{k,n}}} )} )}  - {\varphi_{k,n}}} \right)}^2}}}{{2{\sigma^2_{\omega_\varphi}}}}}} \!+\! {e^{ \!- \frac{{{{\left( {2i\pi  + \arccos ( {y + \cos ( {{\varphi_{k,n}}} )} )} - \varphi_{k,n} \right)}^2}}}{{2{\sigma^2_{\omega_\varphi}}}}}}} \! \right)\!.
	\end{align}
\end{figure*}

\normalsize 
\vspace{-2mm}
\section*{Appendix A: \textsc{Proof of Proposition \ref{InftyBand}}}
\vspace{-0.5mm}
According to Theorem 3.6 in \cite{rust2013convergence}, if $g(x)$ is a real valued, continuous function with period $2$, it follows that 
\vspace{-1.5mm}
\begin{equation}\label{EquationLemmaFejer}
	\lim _{L \rightarrow \infty} \frac{1}{2} \int_{-1}^1 g(u)  F_{L}(u) d u \rightarrow g(0).
	\vspace{-1.5mm}
\end{equation}
Let $y \!=\! 2\cos ( {{ \varphi_{k,n|n-1}}} ) - 2\cos ( \varphi_{k,n}  )$, where ${ \varphi_{k,n|n-1}} = { \varphi_{k,n}} + \omega_\varphi$ and $\omega_\varphi \in \mathcal{C} \mathcal{N}(0, {\sigma^2_{\omega_\varphi}} )$. Then, the probability density function (PDF) of $y$ is given in (\ref{ProbablityEquation}), shown at the bottom of this page. Accordingly, we have
\vspace{-1.5mm}
\begin{align}\label{EquationProbability}
	&\mathbb{E}_{\varphi_{k,n|n-1}}\left[ F_{L}( 2\Delta \cos \varphi_{k,n}) F_{M_r}(\Delta \cos \varphi_{k,n})\right]	\\
	\approx & \int\nolimits_{ - 1 - \cos {\varphi_{k,n}}}^{1 - \cos {\varphi_{k,n}}} {\left( F_{M_r}\left(\frac{y}{2}\right) P_\varphi(y)\right) F_{L}\left( y\right)dy} 
	\stackrel{(a)}= 2 M_r P_\varphi(0), \nonumber
	\vspace{-1mm}
\end{align}
where $F_{L}(x) = \frac{1}{L}\left(\frac{\sin \frac{L \pi x}{2 }}{\sin \frac{\pi x}{2 }}\right)^{2}$. In (\ref{EquationProbability}), the approximation holds since $F_{M_r}\left(\frac{y}{2}\right) P_\varphi(y) \approx 0$ for $y \notin [-1 - \cos(\varphi_{k,n}), 1- \cos(\varphi_{k,n})]$, and ($a$) holds based on (\ref{EquationLemmaFejer}). $\varphi_{k,n}$ can be practically approximated by $\varphi_{k,n|n-1}$ since it has negligible effect on the value of $P_\varphi(0)$. This thus completes the proof.

\normalsize 
\section*{Appendix B: \textsc{Proof of Proposition \ref{VarianceMonotonicity}}}
We only need to verify that $\tilde h(\varphi_{k,n|n-1}, \sigma^2_{\tilde \varphi_{k,n}})$ increases monotonically with $\eta_{k-1}$ since $\tilde R_{k,n}$  increases monotonically with $\tilde h(\varphi_{k,n|n-1}, \sigma^2_{\tilde \varphi_{k,n}})$. Let $f(\eta_{k-1}) = \sqrt{\eta_{k-1}} \left( {1 + {e^{ - \eta_{k-1}}}} \right)$. Proposition \ref{VarianceMonotonicity} holds if $f(\eta_{k-1})$ is a monotonically increasing function of $\eta_{k-1}$ since $\tilde h(\varphi_{k,n|n-1}, \sigma^2_{\tilde \varphi_{k,n}})$ is an affine transformation of $f(\eta_{k-1})$. The derivative of $f(\eta_{k-1})$ satisfies  $f'(\eta_{k-1}) = \frac{1}{{\sqrt \eta_{k-1} }}\left( {\frac{1}{2} + \frac{1}{{{e^{\frac{1}{2}}}}}\left( {\frac{1}{2} - \eta_{k-1}} \right){e^{\frac{1}{2} - \eta_{k-1}}}} \right)  \stackrel{(b)}{\geqslant} \frac{1}{{\sqrt \eta_{k-1} }}\left( {\frac{1}{2} - \frac{1}{{{e^{\frac{1}{2}}}}}\frac{1}{e}} \right) > 0$, where ($b$) holds due to the inequality $\eta_{k-1}e^{\eta_{k-1}} \ge -\frac{1}{e}$ \cite{corless1996lambertw}. Also, $\tilde R_{k,n}$ increases monotonically with $\eta_k$. This thus completes the proof.

\footnotesize  	
\bibliography{mybibfile}
\bibliographystyle{IEEEtran}

\end{document}